\documentclass[10pt,a4paper,runningheads]{llncs}

\synctex=1

\usepackage{times}
\usepackage[latin1]{inputenc}
\usepackage[english]{babel}
\usepackage{amsmath}
\usepackage{amssymb}

\usepackage{colortbl}
\usepackage{txfonts}
\usepackage{xypic}
\usepackage{multirow}
\usepackage{textcomp}
\usepackage{calc}
\usepackage{stmaryrd}
\usepackage{marvosym}
\usepackage{pgf}

\usepackage{ifthen}
\usepackage{mdwlist}
\usepackage{enumerate}
\usepackage{paralist}

\usepackage{tikz}
\usetikzlibrary{arrows,automata}
\usetikzlibrary{decorations.pathmorphing}
\usetikzlibrary{decorations.markings}
\usetikzlibrary{decorations.shapes}

\DeclareMathAlphabet{\mathpzc}{OT1}{pzc}{m}{it}

\tikzset{
	path/.style={dotted},
	every edge/.style={draw,solid},
	normal/.style={solid},
}

\newcommand{\mrk}[1]{{#1}'}
\newcommand{\oldstack}[3]{%
{\ifthenelse{\equal{#1}{1}}{%
\mrk{#2}
}%
{#2}}_{#3}%
}
\newcommand{\stack}[3]{%
[%
{\ifthenelse{\equal{#1}{1}}{%
\mrk{#2}
}%
{#2}}
{\ifthenelse{\equal{#1}{0}}{\ }{} }
{#3}%
]%
}

\newcommand{\va}[1]{\stackrel{#1}{\longrightarrow}}
\newcommand{\ourpath}[1]{\stackrel{#1}{\leadsto}}
\newcommand{\flush}[1]{\stackrel{#1}{\Longrightarrow}}

\newcommand{\chain}[3]{{}^{#1}\!\left[ #2 \right]\!{}^{#3}}

\newcommand{\config}[2]{\langle #1, \ #2 \rangle}
\newcommand{\symb}[1]{\mathop{smb}(#1)}
\newcommand{\state}[1]{\mathop{st}(#1)}



\newcommand{\comp}[1]{ \stackrel {{#1}} \vdash }


\newcommand{\avv}{\curvearrowright }

\DeclareMathOperator{\tree}{Tree}

\DeclareMathOperator{\fl}{Flush}
\DeclareMathOperator{\xz}{Succ}
\DeclareMathOperator{\avvi}{Next}

\newcommand{\flk}[3]{\fl_{#1}(#2,#3)}
\newcommand{\suc}[3]{\xz_{#1}(#2,#3)}
\newcommand{\nex}[3]{\avvi_{#1}(#2,#3)}

\newcommand{\ee}{e}

\begin{document}

\title{Logic Characterization of Floyd Languages}
\author{Violetta Lonati\inst{1}, Dino Mandrioli\inst{2}, Matteo Pradella\inst{2}}

\institute{
  DSI - Universit\`a degli Studi di Milano,
  via Comelico 39/41, Milano, Italy\\
  \email{lonati@dsi.unimi.it}
\and
  DEI - Politecnico di Milano,
  via Ponzio 34/5, Milano, Italy\\
  \email{\{dino.mandrioli, matteo.pradella\}@polimi.it}
}

\maketitle

\begin{abstract}
  Floyd languages (FL), alias Operator Precedence Languages, have
  recently received renewed attention thanks to their closure
  properties and local parsability which allow one to apply automatic
  verification techniques (e.g. model checking) and parallel and
  incremental parsing. They properly include various other classes,
  noticeably Visual Pushdown languages. In this paper we provide a
  characterization of FL in terms a monadic second order logic (MSO),
  in the same style as B\"uchi's one for regular languages. We prove the
  equivalence between automata recognizing FL and the MSO
  formalization.
  \\
  {\bf Keywords: } Operator precedence languages, Deterministic
  Context-Free languages, Monadic Second-Order Logic, Pushdown
  automata.
\end{abstract}

\section{Introduction}
Floyd languages (FL), as we recently renamed Operator Precedence
Languages after their inventor, were originally introduced to support
deterministic parsing of programming and other artificial languages:
by taking inspiration from the structure of arithmetic expressions,
which gives precedence to multiplicative operations w.r.t. additive
ones, Robert Floyd defined an operator precedence matrix (OPM)
associated with a context-free (operator) grammar. When the OPM is
free of conflicts it is easy to build a deterministic shift-reduce
algorithm that associates any language sentence with a unique syntax
tree~\cite{Floyd1963}. FL and related grammars (FG) were also studied
with different motivations, such as grammar inference. This lead to
discover interesting closure properties that are not enjoyed by more
general context-free (CF) languages \cite{Crespi-ReghizziM12}. After
these initial results the interest in FL properties decayed for
several decades, probably due to the advent of more expressive
grammars, such as LR ones \cite{GruneJacobs:08} which also allow for
efficient deterministic parsing.

Recently, however, we revitalized our interest in FL on the basis of
two rather unexpected remarks.  First, and rather occasionally, we
noted that a newer class of CF deterministic languages, namely Visual
Pushdown Languages (VPL) -and other connected families
\cite{Berstel:2001:BGT,conf/mfcs/NowotkaS07,caucal:DSP:2008:1743}- are
a proper subclass of FL. VPL have been introduced and investigated
\cite{AluMad04} with the main motivation to extend to them the same or
similar automatic analysis techniques -noticeably, model checking-
that have been so successful for regular languages; their major
features which made them quite successful in the literature are that:
despite being recognized by infinite state machines -a specialized
class of pushdown automata- they enjoy practically all closure
properties exhibited by regular languages; they can be defined by a
suitable logic formalism that extends in a fairly natural way the
classical Monadic Second Order (MSO) logic characterization introduced
by B\"uchi for finite state automata \cite{thomas90}.  These features,
paired with the decidability of the emptiness problem shared by all CF
languages, makes them amenable for the application of typical model
checking techniques.  When we realized that VPL are subclass of FL
characterized by a well-precise ``shape'' of OPM we also investigated
other closure properties that were not yet known: by joining old
results of decades ago \cite{Crespi-ReghizziMM1978} with new ones
\cite{Crespi-ReghizziM12}, it turns out the FL enjoy the same closure
properties w.r.t. main operations such as Boolean ones, concatenation,
Kleene *, etc. as regular languages and VPL. Thus, FL too are amenable
for a significant extension of model checking techniques.
 
A second major motivation that renewed our interest in FL -which,
however, has a lesser impact on the present research- is their
locality principle, which makes them much better suited than other
deterministic CF languages for parallel and incremental (parsing)
techniques: unlike more general languages, in fact, the parsing of a
substring $w$ of a string $x$ can be carried over independently of the
``context'' of $w$ within $x$; we feel that in the era of multicore
machines the minor loss in expressive power of FG w.r.t. say, LR ones,
is far compensated by the gain of efficiency in -possibly incremental
analysis- that can be obtained by exploiting parallelism
\cite{BCMPPV12}.

In our path of ``rediscovering FL and their properties'', we also
filled up a fairly surprising hole in previous literature, namely the
lack of an automata family that perfectly matches FG in terms of
generative power: Floyd Automata (FA) are reported in \cite{LMP11}
and, with more details and precision, in \cite{LMP10}.

In this paper we provide the ``last tile of the puzzle'', i.e., a
complete characterization of FL in terms of a suitable MSO, so that,
as well as with regular languages and VPL, one can, for instance,
state a language property by means of a MSO formula; then
automatically verify whether a given FA accepts a language that enjoys
that property. Our new MSO logic is certainly inspired by the original
\cite{thomas90} approach, as well as the technique to automatically derive a
FA from a given formula; as it happened also with other previous
``extensions'' of properties and techniques to the FL family, however,
we had to face some new technical difficulties which sharply departed
from the original approaches of both regular and VPL \cite{thomas90},
\cite{jacm/AlurM09}. In this case the main difference between finite state
automata and VPA on one side and FA on the other one is that the
former ones are real-time machines -i.e. read an input character at
any move, whereas FA are not; thus, properties expressed in terms of
character positions cannot exploit the fact that to any position it
corresponds one and only one state of the automaton. In some sense the
logic formalization of a FL must encode the corresponding parsing
algorithm which is far from the trivial one of regular and VPL whose
strings have a shape isomorphic to the corresponding syntax tree.

The paper is structured as follows: Section~\ref{sec:prel} provides the
necessary background about FL and their
automata. Section~\ref{sec:logic} defines a MSO over strings and 
provides two symmetric constructions to derive an equivalent FA from a
MSO formula and conversely. Section~\ref{sec:concl} offers some
conclusion and hints for future work.

\section{Preliminaries}\label{sec:prel}

FL are normally defined through their generating grammars
\cite{Floyd1963,Fischer69}; in this paper, however, we
characterize them through their accepting automata
\cite{LMP10,LMP11} which are the natural way to state equivalence
properties with logic characterization. Nevertheless we assume some
familiarity with classical language theory concepts such as
context-free grammar, parsing, shift-reduce algorithm, syntax tree~\cite{GruneJacobs:08}.

Let $\Sigma = \{a_1, \dots, a_n \}$ be an alphabet. The empty string is denoted
$\epsilon$. 
We use a special symbol \# not in $\Sigma$ to mark the beginning and
the end of any string. This is consistent with the typical operator
parsing technique that requires the look-back and look-ahead of one
character to determine the next parsing action \cite{GruneJacobs:08}.

\begin{definition}\label{def:opm}
  An \textit{operator precedence matrix} (OPM) $M$ over an alphabet
  $\Sigma$ is a partial function $(\Sigma \cup \{\#\})^2 \to \{\lessdot,
  \doteq, \gtrdot\}$, that with each ordered pair $(a,b)$
  associates the OP relation $M_{a,b}$ holding between $a$ and $b$.  We
  call the pair $(\Sigma, M)$ an \emph{operator precedence alphabet}
  (OP).  Relations $\lessdot, \doteq, \gtrdot$, are named
  yields precedence, equal in precedence, takes precedence,
  respectively.
By convention, the initial \# can only yield precedence, and other
symbols can only take precedence on the ending \#.
\end{definition}

\noindent 
If $M_{a,b} = \circ$, where $\circ \in \{\lessdot, \doteq, \gtrdot \}$,
we write $a \circ b$.  For $u,v \in \Sigma^*$ we write $u \circ v$ if
$u = xa$ and $v = by$ with $a \circ b$.
$M$ is \emph{complete} if $M_{a,b}$ is defined for every $a$ and $b$ in $\Sigma$. 
Moreover in the following we assume that $M$ is {\em acyclic}, which
means that $c_1 \doteq c_2 \doteq \ldots \doteq c_k \doteq c_1$ does
not hold for any $c_1, c_2, \ldots c_k \in \Sigma, k\ge1$. See
\cite{Crespi-ReghizziMM1978,Crespi-ReghizziM12,LMP10}
for a discussion on this hypothesis.


\begin{definition}\label{def:fa}
A nondeterministic \emph{Floyd automaton} (FA) is a tuple
$\mathcal A = \langle \Sigma, M, Q, I, F, \delta \rangle $ where:
\begin{itemize}
\item $(\Sigma, M)$ is a precedence alphabet,
\item $Q$ is a set of states (disjoint from $\Sigma$),
\item $I, F \subseteq Q$ are sets of initial and final states, respectively, 
\item $\delta : Q \times ( \Sigma \cup Q) \rightarrow 2^Q$ is the transition function.
\end{itemize}
\end{definition}

\noindent The transition function is the union of two disjoint functions:
\[
\delta_{\text{push}}: Q \times \Sigma \rightarrow 2^Q
\qquad 
\delta_{\text{flush}}: Q \times Q \rightarrow 2^Q
\]
A nondeterministic FA can be represented by a graph
with $Q$ as the set of vertices and $\Sigma \cup Q$ as the set of edge
labelings: there is an edge from state $q$ to state $p$ labelled by
$a \in \Sigma$ if and only if $p \in \delta_{push}(q,a)$ and there is
an edge from state $q$ to state $p$ labelled by $r \in Q$ if and only
if $p \in \delta_{flush}(q,r)$.  To distinguish flush transitions from
push transitions we denote the former ones by a double arrow.

To define the semantics of the automaton, we introduce some notations.
We use letters $p, q, p_i, q_i, \dots $ for states in $Q$ and we set
$\mrk{\Sigma} = \{\mrk a \mid a \in \Sigma \}$; symbols in $\Sigma'$
are called {\em marked} symbols.  Let $\Gamma =  (\Sigma \cup \mrk
\Sigma \cup \{\#\}) \times Q$; we denote symbols in $\Gamma$ as
$\stack 0aq$, $\stack 1aq$, or $\stack 0\#q$, respectively.  We set
$\symb {\stack 0aq} = \symb {\stack 1aq} = a$, $\symb {\stack
  0\#q}=\#$, and $\state {\stack 0aq} = \state {\stack 1aq} = \state
{\stack 0\#q} = q$.

A \emph{configuration} of a FA is any pair 
$C = \config { B_1 B_2 \dots B_n } {a_1 a_2 \dots a_m }$,
where $B_i \in \Gamma$
and $a_i \in \Sigma\cup \{\#\}$.
The first component represents
the contents of the stack, while the second component is the part of input still
to be read.

A computation is a finite sequence of moves $C \vdash
C_1$; there are threeg kinds of moves, depending on the precedence
relation between $\symb {B_n}$ and $a_1$:

\smallskip

\noindent {\bf (push)} 
if $\symb {B_n} \doteq a_1$ then 
$
C_1 = \config{B_1 \dots B_n \stack 0{a_1}q }{a_2 \dots a_m}, \text{ with } q \in \delta_{push}(\state{B_n},a_1)$;
\smallskip

\noindent {\bf (mark)} 
if $\symb {B_n} \lessdot a_1$ then
$
C_1 = \config{B_1 \dots B_n \stack 1{a_1}q }{a_2 \dots a_m}, \text{ with } 
 q \in \delta_{push}(\state{B_n},a_1)$;

\smallskip

\noindent {\bf (flush)} 
if $\symb {B_n} \gtrdot  a_1$ then
let $i$ the greatest index such that
$\symb {B_i} \in \mrk\Sigma$. 
\[
C_1 = \config{B_1 \dots B_{i-2}\stack
0{\symb {B_{i-1}}}q}{ a_1 a_2 \dots a_m }, \text{ with } 
 q \in \delta_{flush}(\state{B_n}, \state{B_{i-1}}).
\]

\smallskip

Finally, we say that a configuration $\stack 0{\#}{q_I}$ is {\em
  starting} if $q_I \in I$ and a configuration $\stack 0{\#}{q_F}$ is
{\em accepting} if $q_F \in F$.  The language accepted by the
automaton is defined as:
\[
L(\mathcal A) = \left\{ x \mid  \config {\stack 0\#{q_I}} {x\#}  \comp * 
\config {\stack 0\#{q_F}} \# , q_I \in I, q_F \in F \right\}.
\]



Notice that transition function $\delta_{\text{push}}$ is used to
perform both push and mark moves.  To distinguish them, 
in the graphical
representation of a FA we will use a solid arrow to denote mark moves
in the state diagram.

The deterministic version of FA is defined along the usual lines.

\begin{definition}
  A FA is deterministic if $I$ is a singleton, and the ranges of
  $\delta_{\text{push}}$ and $\delta_{\text{flush}}$ are both $Q$
  rather than $2^Q$.
\end{definition}

In \cite{LMP10} we proved in a constructive way that nondeterministic FA have the same
expressive power as the deterministic ones and both are equivalent to the original Floyd grammars.

\begin{example}\label{ex:except}
  We define here the stack management of a simple programming language
  that is able to handle nested exceptions. For simplicity, there are
  only two procedures, called $a$ and $b$. Calls and returns are
  denoted by $call_a$, $call_b$, $ret_a$, $ret_b$, respectively.
  During execution, it is possible to install an exception handler
  $hnd$. The last signal that we use is $rst$, that is issued when an
  exception occur, or after a correct execution to uninstall the
  handler. With a $rst$ the stack is ``flushed'', restoring the state
  right before the last $hnd$.  
  Every $hnd$ not installed during the
  execution of a procedure is managed by the OS. We require also that
  procedures are called in an environment controlled by the OS, hence
  calls must always be performed between a $hnd$/$rst$ pair (in
  other words, we do not accept {\em top-level} calls). The automaton
  modeling the above behavior is
  presented in Figure \ref{ex:primo}.

Incidentally, notice that such a language is not a VPL but somewhat
extends their rationale: in fact, whereas VPL allow for unmatched
parentheses only at the beginning of a sentence (for returns) or at
the end (for calls), in this language we can have unmatched $call_a$, $call_b$,
$ret_a$, $ret_b$ within a pair $hnd$, $rst$.
\end{example}


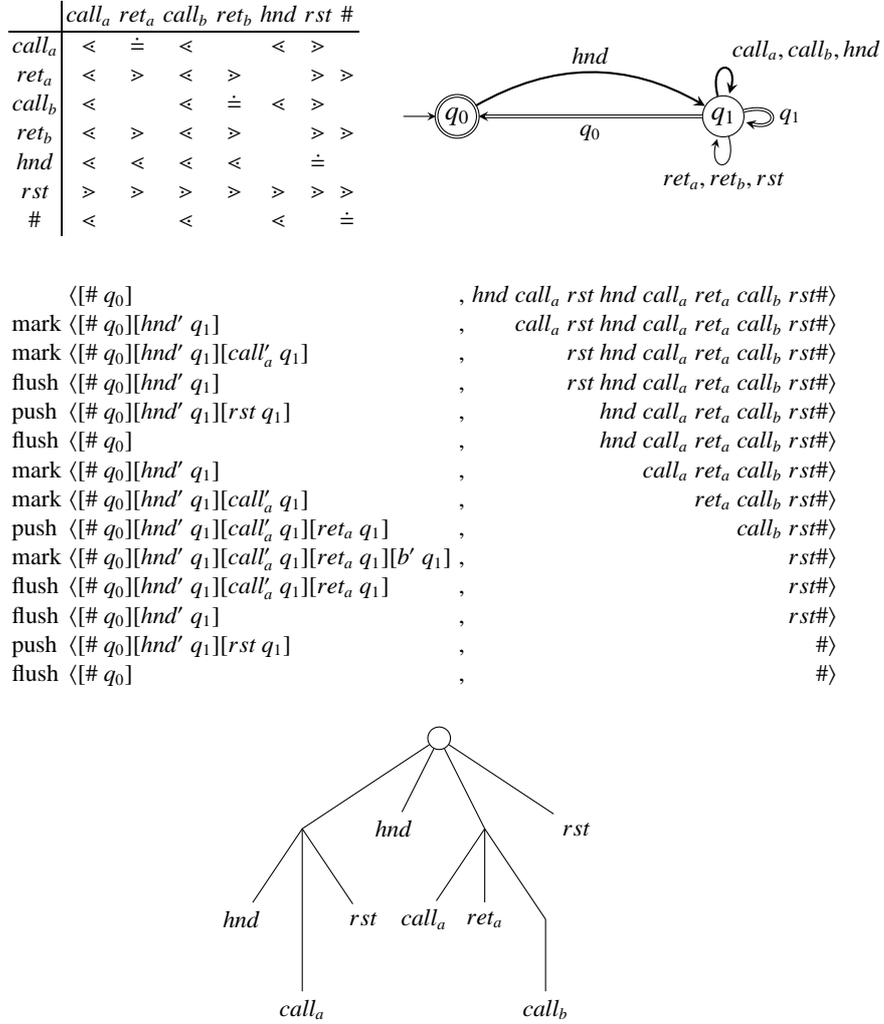
\begin{figure}
\begin{center}
\begin{tabular}{m{.4\textwidth}m{.45\textwidth}}
$
\begin{array}{c|ccccccc}
      & call_a & ret_a & call_b & ret_b & hnd & rst  &  \# \\
\hline
call_a & \lessdot & \dot= & \lessdot &  & \lessdot & \gtrdot  &  \\
ret_a  & \lessdot & \gtrdot & \lessdot & \gtrdot &  & \gtrdot  & \gtrdot \\
call_b & \lessdot & & \lessdot & \dot=  & \lessdot & \gtrdot  &  \\
ret_b  & \lessdot & \gtrdot & \lessdot & \gtrdot &  & \gtrdot  & \gtrdot \\
hnd    & \lessdot & \lessdot & \lessdot & \lessdot & & \dot=  &  \\
rst    & \gtrdot  & \gtrdot & \gtrdot & \gtrdot & \gtrdot  & \gtrdot & \gtrdot \\
\#     & \lessdot & & \lessdot & & \lessdot & & \dot= \\
\end{array}
$
&
\begin{tikzpicture}[every edge/.style={draw,solid}, node distance=4cm, auto, 
                    every state/.style={draw=black!100,scale=0.5}, >=stealth]

\node[initial by arrow, initial text=,state,accepting] (S) {{\huge $q_0$}};
\node[state] (E) [right of=S, xshift=3cm] {{\huge $q_1$}};

\path[->]
(S) edge [thick, bend left]  node {$hnd$} (E)
(E) edge [loop right, double] node {$q_1$} (E)
(E) edge [loop below] node {$ret_a, ret_b, rst$} (E)
(E) edge [thick, loop above, above right] node {$call_a, call_b, hnd$} (E)
(E) edge [below, double]  node {$q_0$} (S) ;
\end{tikzpicture}\\
\[
\begin{array}{llcr}
& \langle \stack 0 {\#}{q_0}      & , &    hnd\ call_a\ rst\ hnd\ call_a\ ret_a\ call_b\ rst \# \rangle \\
\text{mark} & \langle \stack 0 {\#}{q_0}  \stack 0 {hnd'}{q_1}    & , &   call_a\ rst\ hnd\ call_a\ ret_a\ call_b\ rst  \# \rangle \\
\text{mark} & \langle \stack 0 {\#}{q_0}  \stack 0 {hnd'}{q_1}\stack 0 {call_a'}{q_1}    & , &   rst\ hnd\ call_a\ ret_a\ call_b\ rst  \# \rangle \\
\text{flush} & \langle \stack 0 {\#}{q_0}  \stack 0 {hnd'}{q_1}    & , &    rst\ hnd\ call_a\ ret_a\ call_b\ rst   \# \rangle \\
\text{push} & \langle \stack 0 {\#}{q_0}  \stack 0 {hnd'}{q_1}\stack 0 {rst}{q_1}    & , &   hnd\ call_a\ ret_a\ call_b\ rst \# \rangle \\
\text{flush} & \langle \stack 0 {\#}{q_0}      & , &   hnd\ call_a\ ret_a\ call_b\ rst \# \rangle \\
\text{mark} & \langle \stack 0 {\#}{q_0}  \stack 0 {hnd'}{q_1}    & , &   call_a\ ret_a\ call_b\ rst \# \rangle \\
\text{mark} & \langle \stack 0 {\#}{q_0}  \stack 0 {hnd'}{q_1}\stack 0 {call_a'}{q_1}    & , &    ret_a\ call_b\ rst \# \rangle \\
\text{push} & \langle \stack 0 {\#}{q_0}  \stack 0 {hnd'}{q_1}\stack 0 {call_a'}{q_1}\stack 0 {ret_a}{q_1}    & , &   call_b\ rst \# \rangle \\
\text{mark} & \langle \stack 0 {\#}{q_0}  \stack 0 {hnd'}{q_1}\stack 0 {call_a'}{q_1}\stack 0 {ret_a}{q_1}\stack 0 {b'}{q_1}    & , &   rst  \# \rangle \\
\text{flush} & \langle \stack 0 {\#}{q_0}  \stack 0 {hnd'}{q_1}\stack 0 {call_a'}{q_1}\stack 0 {ret_a}{q_1}    & , &    rst \# \rangle \\
\text{flush} & \langle \stack 0 {\#}{q_0}  \stack 0 {hnd'}{q_1}    & , &    rst \# \rangle \\
\text{push} & \langle \stack 0 {\#}{q_0}  \stack 0 {hnd'}{q_1}\stack 0 {rst}{q_1}    & , &    \# \rangle \\
\text{flush} & \langle \stack 0 {\#}{q_0}      & , &    \# \rangle \\
\end{array}
\] \\
\end{tabular} 
\begin{tikzpicture}[scale=0.8]
\node [circle,draw]{}  child { [sibling distance=1cm] child { node {$hnd$} }  
                                           child { child {node {$call_a$}} }
                                           child { node {$rst$} }}
        child { node {$hnd$}  }
        child { [sibling distance=1cm]  child { node {$call_a$} }
                                          child { node {$ret_a$} }
                                          child { child {node {$call_b$}} }
                                        } 
        child { node {$rst$}};
\end{tikzpicture}
\caption{Precedence matrix, automaton, example run, and corresponding tree of Example~\ref{ex:except}.}\label{ex:primo}
\end{center}
\end{figure}

\begin{definition}\label{def:chain}
A \emph{simple chain} is a string $c_0 c_1 c_2 \dots c_\ell c_{\ell+1}$,
written as
$
\chain {c_0} {c_1 c_2 \dots c_\ell} {c_{\ell+1}},
$
such that:
$c_0, c_{\ell+1} \in \Sigma \cup \{\#\}$,
$c_i \in \Sigma$ for every $i = 1,2, \dots \ell$, 
and $c_0 \lessdot c_1 \doteq c_2 \dots c_{\ell-1} \doteq c_\ell \gtrdot c_{\ell+1}$.
A \emph{composed chain} is a string 
$c_0 s_0 c_1 s_1 c_2  \dots c_\ell s_\ell c_{\ell+1}$, 
where
$\chain {c_0}{c_1 c_2 \dots c_\ell}{c_{\ell+1}}$ is a simple chain, and
$s_i \in \Sigma^*$ is the empty string 
or is such that $\chain {c_i} {s_i} {c_{i+1}}$ is a chain (simple or composed),
for every $i = 0,1, \dots, \ell$. 
Such a composed chain will be written as
$\chain {c_0} {s_0 c_1 s_1 c_2 \dots c_\ell s_\ell} {c_{\ell+1}}$.

A string $s \in \Sigma^*$ is \emph{compatible} with the OPM $M$ if $\ \chain{\#}{s}{\#}$ is a chain.
\end{definition}

\begin{definition}
Let $\mathcal A$ be a Floyd automaton.
We call a \emph{support} for the simple chain
$\chain {c_0} {c_1 c_2 \dots c_\ell} {c_{\ell+1}}$
any path in $\mathcal A$ of the form
\begin{equation}
\label{eq:simplechain}
q_0
\va{c_1}{q_1}
\va{}{}
\dots
\va{}q_{\ell-1}
\va{c_{\ell}}{q_\ell}
\flush{q_0} {q_{\ell+1}}
\end{equation}

%
Notice that the label of the last (and only) flush is exactly $q_0$, i.e. the first state of the path; this flush is executed because of relation $c_\ell \gtrdot c_{\ell+1}$.

We call a \emph{support for the composed chain} 
$\chain {c_0} {s_0 c_1 s_1 c_2 \dots c_\ell s_\ell} {c_{\ell+1}}$
any path in $\mathcal A$ of the form
\begin{equation}
\label{eq:compchain}
q_0
\ourpath{s_0}{q'_0}
\va{c_1}{q_1}
\ourpath{s_1}{q'_1}
\va{c_2}{}
\dots
\va{c_\ell} {q_\ell}
\ourpath{s_\ell}{q'_\ell}
\flush{q'_0}{q_{\ell+1}}
\end{equation}
where, for every $i = 0, 1, \dots, \ell$: 

\begin{itemize}
\item if $s_i \neq \epsilon$, then $q_i \ourpath{s_i}{q'_i} $ 
is a support for the chain $\chain {c_i} {s_i} {c_{i+1}}$, i.e.,
it can be decomposed as $q_i\ourpath{s_i}{q''_i} \flush{q_i}{q'_i}$.

\item if $s_i = \epsilon$, then $q'_i = q_i$.
\end{itemize}
Notice that the label of the last flush is exactly $q'_0$.
\end{definition}

The chains fully determine the structure of the parsing of any
automaton over $(\Sigma, M)$. Indeed, if the automaton performs the computation
\[ 
\config{\stack 0 a {q_0}} {s b} \comp * 
\config{\stack 0 a {q}} {b} .
\]
then $\chain asb$ 
is necessarily a chain over $(\Sigma, M)$ and there exists a support
like \eqref{eq:compchain} with $s = s_0 c_1 \dots c_\ell s_\ell$ and $q_{\ell+1} = q$.

Furthermore, the above computation corresponds to the parsing by the
automaton of the string $s_0 c_1 \dots c_\ell s_{\ell}$ within the
context $a$,$b$. Notice that such context contains all
information needed to build the subtree whose frontier is that string.
This is a distinguishing feature of FL, not shared by other
deterministic languages: we call it the \emph{locality principle} of Floyd languages.

\begin{example}
With reference to the tree in Figure~\ref{ex:primo}, the
parsing of substring $hnd \ call_a \ rst \ hnd $ 
is given by computation
\[
\langle \stack 0 \# {q_0} \ ,\ hnd \ call_a \ rst \ hnd  \rangle
\comp{*}
\langle \stack 0 \# {q_0} \ ,\ hnd \rangle
\]
which corresponds to 
support 
$
q_0
\va{hnd}{q_1}
\va{call_a}{q_1}
\flush{q_1}{q_1}
\va{rst}{q_1}
\flush{q_0}{q_0}
$ of chain $\chain {\#} {hnd \ call_a \ rst} {hnd}$.
\end{example}

\begin{definition}\label{def:maxfa}
Given the OP alphabet $(\Sigma, M)$,  let us consider the FA
$\mathcal A(\Sigma, M)$ $=$ $\langle \Sigma, M,$ $\{q\}, \{q\}, \{q\}, \delta_{max}
\rangle $ where  $\delta_{max}(q,q) = q$, and $\delta_{max}(q,c) = q$,
$\forall c \in \Sigma$.
We call $\mathcal A(\Sigma, M)$ the \emph{Floyd Max-Automaton} over $\Sigma, M$.
\end{definition}

For a max-automaton $\mathcal A(\Sigma,M)$ each chain has a support;
since there is a chain $\chain{\#}{s}{\#}$ for any string $s$
compatible with $M$, a string is accepted by $\mathcal A(\Sigma, M)$
iff it is compatible with $M$.  Also, whenever $M$ is complete, each
string is compatible with $M$, hence accepted by the max-automaton. It
is not difficult to verify that a max-automaton is equivalent to a
max-grammar as defined in \cite{Crespi-ReghizziMM1978}; thus, when M
is complete both the max-automaton and the max-grammar define the
universal language $\Sigma^*$ by assigning to any string the (unique)
structure compatible with the OPM.

\medskip

In conclusion, given an OP alphabet, the OPM $M$ assigns a structure
to any string in $\Sigma^*$ compatible with $M$; a FA defined on the
OP alphabet selects an appropriate subset within such a
``universe''. In some sense this property is yet another variation of
the fundamental Chomsky-Sh\"utzenberger theorem.





\section{Logic characterization of FL}\label{sec:logic}

We are now ready to provide a characterization of FL in terms of a
suitable Monadic Second Order (MSO) logic in the same vein as
originally proposed bu B\"uchi for regular languages and subsequently
extended by Alur and Madhusudan for VPL. The essence of the approach
consists in defining language properties in terms of relations between
the positions of characters in the strings: first order variables are
used to denote positions whereas second order ones denote subsets of
positions; then, suitable constructions build an automaton from a
given formula and conversely, in such a way that formula and
corresponding automaton define the same language.  The extension
designed by~\cite{jacm/AlurM09} introduced a new basic binary predicate
$\leadsto$ in the syntax of the MSO logic, $x \leadsto y$ representing
the fact that in positions $x$ and $y$ two matching parentheses
--named call and return, respectively in their terminology-- are
located.  In the case of FL, however, we have to face new problems.
\begin{itemize}
\item Both finite state automata and VPA are real-time machines, i.e.,
  they read one input character at every move; this is not the case
  with more general machines such as FA, which do not advance the
  input head when performing flush transitions, and may also apply
  many flush transitions before the next push or mark which are the
  transitions that consume input. As a consequence, whereas in the
  logic characterization of regular and VP languages any first order
  variable can belong to only one second order variable representing
  an automaton state, in this case --when the automaton performs a
  flush-- the same position may correspond to different states and
  therefore belong to different second-order variables.
\item In VPL the $\leadsto$ relation is one-to-one, since any call
  matches with only one return, if any, and conversely. In FL, instead
  the same position $y$ can be ``paired'' with different positions $x$
  in correspondence of many flush transitions with no push/mark in
  between, as it happens for instance when parsing a derivation such
  as $A \stackrel{*}{\Rightarrow} \alpha^k A$, consisting of $k$ immediate
  derivations $A \Rightarrow \alpha A$; symmetrically the same
  position $x$ can be paired with many positions $y$.
\end{itemize}
In essence our goal is to formalize in terms of MSO formulas a
complete parsing algorithm for FL, a much more complex algorithm than
it is needed for regular and VP languages.  The first step to achieve
our goal is to define a new relation between (first order variables
denoting) the positions in a string.

In some sense the new relation formalizes structural properties of FL
strings in the same way as the VPL $\leadsto$  relation does for VPL; the new
relation, however, is more complex as its VPL counterpart in a
parallel way as FL are much richer than VPL.

\begin{definition} 
\label{def:avv}
Consider a string $s \in \Sigma^*$ and a OPM $M$.  For $0 \le x < y
\le |s|+1$, we write $x \avv y $ iff there exists a sub-string of $\#
s\#$ which is a chain $\chain{a}{r}{b}$,
such that $a$ is in position $x$ and $b$ is in position
$y$.
\end{definition}

\begin{example}
  With reference to the string of Figure~\ref{ex:primo}, we have $1
  \avv 3$, $0 \avv 4$, $6 \avv 8$, $4 \avv 8$, and $0 \avv 9$.  Notice
  that, in the parsing of the string, such pairs correspond to
  contexts where a reduce operation is executed (they are listed
  according to their execution order).
\end{example}
In general $x \avv y$ implies $y > x+1$, and a position $x$ may be in
such a relation with more than one position and vice versa.  Moreover,
if $s$ is compatible with $M$, then $0 \avv |s|+1$.


\subsection{A Monadic Second-Order Logic over Operator Precedence Alphabets}

Let ($\Sigma$,$M$) be an OP alphabet. According to Definition~\ref{def:avv} it induces the relation
$\avv$ over positions of characters in any words in $\Sigma^*$.  Let us define a countable infinite set
of first-order variables $x, y, \dots$ and a countable infinite set of
monadic second-order (set) variables $X,Y, \dots$.


\begin{definition}
	The MSO$_{\Sigma,M}$ (\emph{monadic second-order logic} over $(\Sigma,
	M)$) is defined by the following syntax:
\[
	\varphi := 
		a(x) \mid 
		x \in X \mid
		x \leq y \mid
		x \avv y \mid
                x = y+1 \mid
		\neg \varphi \mid
		\varphi \lor \varphi \mid
		\exists x.\varphi \mid
		\exists X.\varphi 
\]
where $a \in \Sigma$, $x, y$ are first-order variables and $X$ is a set variable.
\end{definition}

MSO$_{\Sigma,M}$ formulae are interpreted over $(\Sigma, M)$ strings and the positions of their characters in the following natural way: 
\begin{itemize}
\item
first-order variables are interpreted over positions of the string; 
\item
second-order variables are interpreted over sets of positions;
\item
$a(x)$ is true iff the character in position $x$ is $a$;
\item
$x \avv y $ is true iff $x$ and $y$ satisfy Definition~\ref{def:avv};
\item
the other logical symbols have the usual meaning.
\end{itemize}



A sentence is a formula without free variables.
The language of all strings $s \in \Sigma^*$ such that $\# s \# \models \varphi$ is denoted by
$L(\varphi)$:
\[
L(\varphi) = \{ s\in \Sigma^* \mid \# s \# \models \varphi \}
\]
where $\models$ is the standard satisfaction relation.

\begin{example}
\label{ex:ss}
Consider the language of Example~\ref{ex:except}, with the structure
implied by its OPM. 
The following sentence defines it:
\[
\forall z \left(
\left( 
\begin{array}{c}
call_a(z) \lor ret_a(z) \\
\lor \\
call_b(z) \lor ret_b(z)         
\end{array}
\right)
\Rightarrow
\exists x, y
\left( 
	\begin{array}{c}
	x \avv y  \land 
        x < z < y  \\
\land \\
        hnd(x+1) \land rst(y-1)
	\end{array}
\right)
\right).
\]
\end{example}

\begin{example}
\label{ex:tt}
Consider again Example~\ref{ex:except}. If we want to add the
additional constraint that procedure $b$ cannot directly install handlers
(e.g. for security reasons), we may state it through the following
formula:
\[
\forall z 
\left(
hnd(z)
\Rightarrow 
\neg \exists u \left(
call_b (u) \ \land \
(u+1 = z \lor u \avv z )
\right)
\right)
\]

\end{example}

We are now ready for the main result.

\begin{theorem}
\label{teo:logic}
A language $L$ over $(\Sigma,M)$ is a FL if and only if there exists a
MSO$_{\Sigma, M}$ sentence $\varphi$ such that $L= L(\varphi)$.
\end{theorem}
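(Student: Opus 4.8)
The plan is to prove both implications in the classic style of B\"uchi's theorem (and its extension to VPL by Alur and Madhusudan), adapting the construction to the operator-precedence setting. For the inductive treatment of formulas with free variables I would work over the extended OP alphabet $\Sigma_k = \Sigma \times \{0,1\}^k$, where $k$ bounds the number of free first- and second-order variables and each tuple of bits encodes, position by position, a candidate valuation; the OPM is lifted to $\Sigma_k$ by letting $(a,\bar\imath)$ and $(b,\bar\jmath)$ stand in the same relation as $a$ and $b$, so that the precedence structure, and hence the whole chain and $\avv$ machinery, is unaffected by the extra bits. Two facts will be used repeatedly: that for a \emph{fixed} OPM the family of FL is a Boolean algebra (the Boolean closure recalled in the introduction, from \cite{Crespi-ReghizziMM1978,Crespi-ReghizziM12}), and that it is closed under the letter-to-letter projection $\Sigma_k \to \Sigma_{k-1}$ that erases one bit-track, which preserves the $\Sigma$-component and therefore maps FL to FL.

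For the direction from automata to logic, given a FA $\mathcal A = \langle \Sigma, M, Q, I, F, \delta\rangle$ I would write a sentence asserting the existence of an accepting run. The difficulty, already stressed above, is that $\mathcal A$ is not real-time: on flush moves the input head does not advance and a single position is traversed in several states, so the textbook ``one set variable per state, one state per position'' encoding fails and the $\avv$ relation must carry the burden of describing flushes. Concretely, I would introduce second-order variables $\{X_q\}_{q\in Q}$ recording, for each position, the state entered by the push or mark move that reads its symbol, together with variables encoding, for each pair $x \avv y$, the state produced by the corresponding flush (labelled, as in \eqref{eq:compchain}, by the first state of the chain). The sentence is the conjunction of: initialization in some $q_I \in I$; a local clause forcing every push/mark step to respect $\delta_{push}$ and the precedence relation between consecutive read symbols; a clause stating that whenever $x \avv y$ there is a chain support \eqref{eq:compchain} whose internal states agree with the $X_q$'s and whose final flush belongs to $\delta_{flush}$ with the correct label; and acceptance, i.e. $0 \avv |s|+1$ terminating in some $q_F \in F$. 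Because $\avv$ is many-to-many, these clauses are quantified over all witnessing pairs rather than read off a bijection, which is exactly where the encoding departs from the regular and VPL cases.

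For the converse direction I would proceed by structural induction on the formula over $(\Sigma_k, M)$. The atomic formulas $a(x)$, $x \in X$, $x \le y$ and $x = y+1$ are straightforwardly recognized by reading the valuation bits, hence define FL; the only delicate atom is $x \avv y$, for which I would build a FA that, using the chain machinery of Definition~\ref{def:chain} and the notion of support, accepts exactly the encodings in which the uniquely $x$- and $y$-marked positions are the endpoints of a chain $\chain{a}{r}{b}$. The inductive steps then use the closure properties noted above: disjunction by union, negation by complement within the FL over the fixed OPM, and existential quantification (both orders) by the bit-erasing projection $\Sigma_k \to \Sigma_{k-1}$. Eliminating all quantifiers down to a sentence over $\Sigma_0 = \Sigma$ yields a FA $\mathcal A$ with $L(\mathcal A) = L(\varphi)$.

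I expect the main obstacle to be the automaton-to-logic construction, and within it the faithful encoding of flush transitions: since flushes consume no input and a single position participates as endpoint in arbitrarily many chains, I must ensure that the state threaded along each support \eqref{eq:compchain} is consistent across all overlapping chains simultaneously, and that the label of each modeled flush really is the first state $q'_0$ of its chain rather than some other state visited at that position. Verifying that the conjunction of the local push/mark clauses and the per-$\avv$ flush clauses captures exactly the runs of $\mathcal A$ --- neither ruling out legal interleavings of flushes nor admitting spurious ones --- is the technical heart of the proof; building the FA for the atom $x \avv y$ is the secondary difficulty, as it must internalize the recursive definition of chains.
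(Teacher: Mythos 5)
Your plan coincides with the paper's proof in its overall architecture in both directions: for formula-to-automaton, the Thomas-style induction over the bit-extended alphabet with the lifted OPM, closure under union, complementation and bit-erasing projection, and a dedicated FA for the atom $x \avv y$ (the paper builds this automaton explicitly, Figure~\ref{fig:buchi}); for automaton-to-logic, second-order variables recording the push/mark state at each position plus a separate encoding of flush states, with clauses enforcing $\delta_{push}$, $\delta_{flush}$ and acceptance via $0 \avv |s|+1$. You also correctly identify where the construction must depart from the regular and VPL cases.

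There is, however, one genuine gap, precisely at the point you leave unspecified. You propose ``variables encoding, for each pair $x \avv y$, the state produced by the corresponding flush,'' but MSO variables are monadic: there is no second-order variable ranging over sets of \emph{pairs}, and since $\avv$ is many-to-many you cannot recover a per-pair state assignment from a single family of position sets. Taken literally, this step cannot be written down in MSO$_{\Sigma,M}$. The paper's device is to split the assignment into two families of monadic variables, $M_k$ (containing the position $x$ preceding the mark of the chain) and $F_k$ (containing $y-1$, where the flush occurs), to define $\flk{k}{x}{y} := x \avv y \land x \in M_k \land y-1 \in F_k$, and to impose $\varphi_{flush\_unique}$ to exclude the spurious combinations that this conjunction would otherwise admit when a position lies in several $M_k$'s or several $F_k$'s (which, as Figure~\ref{fig:MP} shows, does happen). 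The faithfulness of this monadic encoding --- the consistency across overlapping chains that you flag as the technical heart --- is then established by induction on the structure of chains via the auxiliary formulas $\psi_{i,k}$ and Lemmata~\ref{lem:a2f} and~\ref{lem:f2a}, which relate satisfaction of $\psi_{i,k}$ on $asb$ to the existence of a support $q_i \ourpath{s} q_k$. You need to supply the $M_k$/$F_k$ split (or an equivalent monadic trick), together with the accompanying uniqueness axioms, before the rest of your automaton-to-logic plan can be carried out.
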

The proof is constructive and structured in the following two subsections.

\subsection{From MSO$_{\Sigma, M}$ to Floyd automata}

\begin{proposition}
\label{teo:logic:f2a}
Let $(\Sigma, M)$ be an operator precedence alphabet and  $\varphi$ be a MSO$_{\Sigma, M}$
sentence. Then $L(\varphi)$ can be recognized by a Floyd automaton over
$(\Sigma, M)$.
\end{proposition}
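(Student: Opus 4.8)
The plan is to follow the B\"uchi--Elgot--Trakhtenbrot route by structural induction on $\varphi$, reducing the statement to a handful of base-case automata together with the closure properties of FL recalled in the introduction. First I would fix, for every subformula $\psi$ with free first-order variables $x_1,\dots,x_p$ and free set variables $X_1,\dots,X_q$, the extended alphabet $\Sigma_\psi = \Sigma \times \{0,1\}^{p+q}$, and lift $M$ to an OPM $M_\psi$ on $\Sigma_\psi$ by letting the precedence relation between two extended letters be that of their $\Sigma$-components. A word over $\Sigma_\psi$ then encodes a $\Sigma$-word together with an interpretation of the free variables (a position belongs to $x_i$, resp.\ $X_j$, iff the corresponding bit is $1$); it is \emph{well formed} if each first-order track carries exactly one $1$. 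Crucially, since $M_\psi$ depends only on the $\Sigma$-component, all automata produced during the induction live over the same OP alphabet, so the Boolean closure of FL (which is stated for a fixed OPM) applies verbatim. The target is an FA over $(\Sigma_\psi,M_\psi)$ recognising exactly the well-formed encodings satisfying $\psi$; for a sentence $\Sigma_\psi=\Sigma$ and we are done.

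The base cases for $a(x)$, $x\in X$, $x\le y$ and $x=y+1$, together with the well-formedness language, I would handle by enriching the Max-Automaton $\mathcal A(\Sigma_\psi,M_\psi)$ with a bounded amount of bookkeeping in its states: since these predicates only constrain the label of a single marked position, its membership in a track, or the relative reading order of two marked positions, a few monotone flags --- set when the relevant marked letter is consumed by a push or a mark, and threaded through the flush transitions --- suffice. Because the Max-Automaton already has a support for every chain, i.e.\ accepts every compatible word, the only effect of the added flags is to prune the runs that violate the predicate.

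The heart of the construction, and the step I expect to be the main obstacle, is the automaton for $x \avv y$. By Definition~\ref{def:avv} and the parsing characterisation recalled before Definition~\ref{def:maxfa}, $x \avv y$ holds exactly when the Max-Automaton performs a flush reducing a chain $\chain{a}{r}{b}$ whose left context $a$ sits at position $x$ and whose right context $b$ (the look-ahead triggering that flush) sits at position $y$. So I would again start from $\mathcal A(\Sigma_\psi,M_\psi)$ and detect such a witnessing flush. The left context is easy: the extended letter of the stack symbol $B_{i-1}$ that a flush exposes is preserved across flushes, hence it still carries the bit saying whether its position is the one assigned to $x$. The difficulty is the right context: a flush transition consults only $\state{B_n}$ and $\state{B_{i-1}}$, never the look-ahead, so the automaton cannot test on the spot whether $b$ is the $y$-position --- precisely the non-real-time behaviour of FA stressed in the introduction. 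I would resolve it by deferring the test: when a flush exposes a $B_{i-1}$ marked as the $x$-position I set a \emph{pending} flag in the resulting state; since no input is consumed until the look-ahead $b$ is finally read by a push or a mark, the next consuming move reads exactly that $b$, at which point I check its $y$-bit and, if it matches, latch a permanent acceptance flag (the pending flag being simply carried through any intervening flushes). As $\avv$ is many-to-many, the same $x$-symbol may be the exposed context of several flushes with different look-aheads; the pending/latch mechanism inspects each of them in turn and accepts iff one pairs $x$ with $y$.

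For the inductive step I would translate the connectives directly into closure operations: $\varphi_1\lor\varphi_2$ becomes union and $\neg\varphi$ becomes complementation within the well-formed encodings, both available because all automata share $(\Sigma_\psi,M_\psi)$ and FL over a fixed OPM form a Boolean algebra. Each existential quantifier $\exists x.\psi$ or $\exists X.\psi$ becomes the length-preserving projection erasing the corresponding bit track of $\Sigma_\psi$; this projection is realised by a nondeterministic FA that guesses the erased bits, and it respects the OPM because it does not touch the $\Sigma$-component, so FL is closed under it. Intersecting the automaton for the matrix with the (FL) well-formedness language keeps encodings legal after projection. Unwinding the induction for a sentence yields an FA over $(\Sigma,M)$ for $L(\varphi)$, as required. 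The only genuinely new ingredient beyond B\"uchi's classical argument is the $\avv$ automaton above; the remaining steps are the expected adaptation, resting on the closure properties of FL.
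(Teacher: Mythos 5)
Your proposal is correct and follows essentially the same route as the paper: a B\"uchi-style induction over the lifted alphabet (with the OPM extended through the $\Sigma$-projection), union, complementation and projection for the connectives and quantifiers, and a dedicated construction for the atomic formula $x \avv y$ that records the $x$-marked symbol in the state of the stack entry exposed by a flush and defers the test of the $y$-mark to the next input-consuming move. Your pending-flag mechanism is precisely the automaton of Figure~\ref{fig:buchi}, where the flush labelled $q_1$ detects that the exposed left context is the $X_i$-position and the subsequent push/mark on $[X_j]$ performs the deferred check; the only cosmetic difference is that you keep first-order variables as singleton tracks rather than translating them into second-order variables as the paper does following Thomas.
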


\begin{proof}
The proof follows the one by Thomas~\cite{thomas90} and is composed of two steps:
first the formula is rewritten so that
no predicate symbols nor first order variables are used;
then an equivalent FA is built inductively.
	
Let $\Sigma$ be $\{a_1, a_2, \dots,a_n \}$. For each predicate symbol $a_i$ we
introduce a fresh set variable $X_i$, therefore formula $a_i(x)$ will be
translated into $x \in X_i$.
Following the standard construction of~\cite{thomas90}, we also translate every first order variable into a fresh
second order variable with the additional constraint that the set it represents
contain exactly one position.


Let $\varphi'$ be the formula obtained from $\varphi$ by such a translation,
and consider any subformula $\psi$ of $\varphi'$:
let $X_1, X_2, \dots, X_n, X_{n+1}, \dots X_{n+m(\psi)}$ be the (second order)
variables appearing in $\psi$. 
Recall that $X_1, \dots, X_n$ represent
symbols in $\Sigma$, hence they are never quantified. 

As usual we interpret formulae over strings; in this case we use the alphabet 
\[
\Lambda(\psi)  = \left\{ \alpha \in  \{0,1\}^{n+m(\psi)} \mid 
	\exists !i  \text{ s.t. } 1 \le i \le n, \ \alpha_i = 1
	\right\} 
\]
A string $w \in \Lambda(\psi)^*$, with $|w| = \ell$, is used to interpret
$\psi$ in
the following way: the projection over $j$-th component of $\Lambda(\psi)$ gives  an
evaluation $\{1,2,\dots, \ell\} \to \{0,1\}$ of $X_j$, 
for every $1 \leq j \le n+m(\psi)$.

For any $\alpha \in \Lambda(\psi)$, the projection of $\alpha$ over the first $n$ components encodes a symbol in
$\Sigma$, denoted as $symb(\alpha)$. 
The matrix $M$ over $\Sigma$ can be naturally extended to the OPM
$M(\psi)$ over $\Lambda(\psi)$ by defining ${M(\psi)}_{\alpha, \beta} = M_{symb(\alpha), symb(\beta)}$
for any $\alpha, \beta \in \Lambda(\psi)$.

We now build a FA $\mathcal A$ equivalent to $\varphi'$. The construction is inductive on the structure of the formula: first we define the FA for all atomic formulae. We give here only the construction for $\avv$, since for the other ones the construction is standard and is the same as in~\cite{thomas90}.

Figure~\ref{fig:buchi} represents the Floyd automaton for atomic formula $\psi = 
X_i \avv X_j$ (notice that $i, j > n$).
For the sake of brevity, we use notation $[X_i]$ to represent the set of all tuples
$\Lambda(\psi)$ having the $i$-th component equal to 1; notation $[\bar{X}]$
represents the set of all tuples in $\Lambda(\psi)$ having both $i$-th and $j$-th
components equal to 0.
The automaton, after a generic sequence of moves corresponding to
visiting an irrelevant portion of the syntax tree, when reading $X_i$
performs either a mark or a push move, depending on whether $X_i$ is a
leftmost leaf of the tree or not; then it visits the subsequent
subtree ending with a flush labeled $q_1$; at this point, if it reads
$X_j$, it accepts anything else will follow the examined fragment.

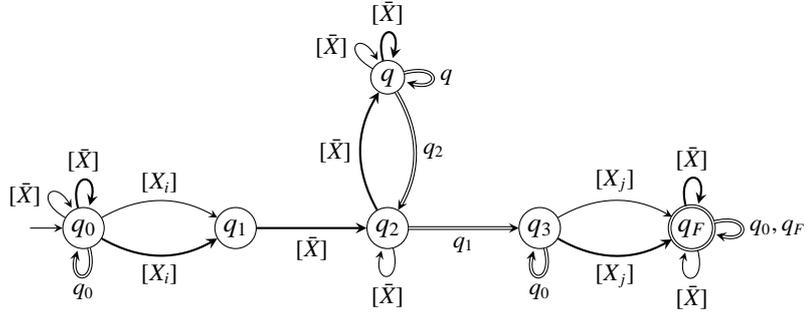
\begin{figure}
\begin{center}
\begin{tikzpicture}[every edge/.style={draw,solid}, node distance=4cm, auto, 
                    every state/.style={draw=black!100,scale=0.5}, >=stealth]

\node[initial by arrow, initial text=,state] (Q0) {{\huge $q_0$}};
\node[state] (Q1) [right of=Q0, xshift=0cm] {{\huge $q_1$}};
\node[state] (Q2) [right of=Q1, xshift=0cm] {{\huge $q_2$}};
\node[state] (Q)  [above of=Q2, xshift=0cm] {{\huge $q$}};
\node[state] (Q3) [right of=Q2, xshift=0cm] {{\huge $q_3$}};
\node[state] (QF) [right of=Q3, xshift=0cm, accepting] {{\huge $q_F$}};

\path[->]

(Q1) edge [thick, below]  node {$[\bar{X}]$} (Q2)
(Q2) edge [thick, left, bend left]  node {$[\bar{X}]$} (Q)
(Q)  edge [double, right, bend left]  node {$q_2$} (Q2)

(Q0) edge [bend left, above]  node {$[X_i]$} (Q1) 
(Q0) edge [bend right, below, thick]  node {$[X_i]$} (Q1) 

(Q2) edge [double, below]  node {$q_1$} (Q3) 

(Q0) edge [in=120, out=150, loop, left] node {$[\bar{X}] $} (Q0)
(Q0) edge [thick, loop above, above] node {$[\bar{X}]$} (Q0)
(Q0) edge [double, loop below] node {$q_0$} (Q0)

(Q2) edge [loop below] node {$[\bar{X}]$} (Q2)

(Q) edge [in=120, out=150, loop, left] node {$[\bar{X}]$} (Q)
(Q) edge [thick, loop above, above] node {$[\bar{X}] $} (Q)
(Q) edge [double, right, loop right]  node {$q$} (Q) 

(Q3) edge [bend left, above]  node {$[X_j]$} (QF) 
(Q3) edge [bend right, below, thick]  node {$[X_j]$} (QF) 
(Q3) edge [double, loop below] node {$q_0$} (Q3)

(QF) edge [loop below] node {$[\bar{X}] $} (QF)
(QF) edge [thick, loop above, above] node {$[\bar{X}]$} (QF)
(QF) edge [double, right, loop right]  node {$q_0, q_F$} (QF) 
;

\end{tikzpicture}
\caption{Floyd automaton for atomic formula $\psi = X_i \avv X_j$ }\label{fig:buchi}
\end{center}
\end{figure}

Then, a natural inductive path leads to the construction of the automaton associated with a generic MSO formula:
the disjunction of two subformulae can be obtained by building the
union automaton of the two corresponding automata; similarly for negation.
The existential quantification of $X_i$ is obtained by projection erasing the
$i$-th component. Notice that all matrices $M(\psi)$ are well defined for any
$\psi$ because the first $n$ components of the alphabet are never erased by
quantification.
The alphabet of the automaton equivalent to $\varphi'$ is $\Lambda(\varphi')=\{0,1\}^n$, which is
in bijection with $\Sigma$. 
\end{proof}

\subsection{From Floyd automata to MSO$_{\Sigma, M}$}

Let $\mathcal A$ be a deterministic Floyd automaton over
$(\Sigma,M)$. We build a MSO$_{\Sigma, M}$ sentence $\varphi$ such
that $L (\mathcal A)= L(\varphi)$.
The main idea for encoding the behavior of the Floyd automaton is
based on assigning the states visited during its run to positions
along the same lines stated by B\"uchi \cite{thomas90} and
extended for VPL \cite{jacm/AlurM09}. Unlike finite state automata and VPA, however, Floyd automata
do not work on-line.  Hence, it is not possible to assign a single
state to every position.  Let $Q = \{q_0, q_1, \ldots, q_N\}$ be the
states of $\mathcal A$ with $q_0$ initial; as usual, we will use second order variables
to encode them.  We shall need three different sets of second order
variables, namely $P_0, P_1, \ldots, P_N$, $M_0, M_1, \ldots, M_N$ and
$F_0, F_1, \ldots, F_N$: set $P_i$ contains
those positions of $s$ where state $i$ may be assumed \emph{after} a
push transition.
$M_i$ and $F_i$ represent the state reached after a flush: $F_i$
contains the positions where the flush occurs, whereas $M_i$ contains
the positions preceding the corresponding mark. Notice that any position
belongs to one only $P_i$, whereas it may belong to several $F_i$ or
$M_i$ (see Figure~\ref{fig:MP}).

\begin{figure}
\begin{center}
\begin{tabular}{ccc}
\begin{tikzpicture}[scale=0.7]
\node [circle,draw]{}   child { node {$t \in M_1 \cap M_2$} }  
                        child { child { child 
                                        child { node {$w \in F_1$} }}
                                child { node {$z \in F_2$}}}; 
\end{tikzpicture} & \qquad \qquad \qquad &
\begin{tikzpicture}[scale=0.7]
\node [circle,draw]{}   child { node {$w \in M_1$} }  
                        child { child { node {$z \in M_2$}}
                                child { child 
                                        child { node {$t \in F_1 \cap F_2$} }}
                                }; 
\end{tikzpicture}
\end{tabular}
\end{center}
\caption{Example trees with a position $t$ belonging to more than
  one $M_i$ (left) and $F_i$ (right).}\label{fig:MP}
\end{figure}
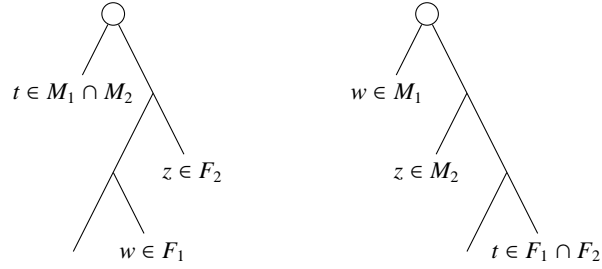

We show that $\mathcal A$ accepts a string $s$ iff $\# s \# \ \models
\ \varphi$, where
\begin{equation}
\label{eq:logic}
\begin{array}{lll}
\varphi & := & 
\exists 
P_0, P_1, \ldots, P_N,
M_0, M_1, \ldots, M_N,
F_0, F_1, \ldots, F_N, \ee
\quad \varphi'  \\
\varphi'  & := & 
0 \in P_0 
 \land 
\bigvee_{i \in F} \ee \in F_i
 \land 
\neg \exists x (\ee + 1< x) 
 \land 
\#(\ee+1)
 \land 
\varphi_\delta
\land 
\varphi_{exist}
 \land 
\varphi_{unique}.
\end{array}
\end{equation}
The first clause in $\varphi'$ encodes the initial state, whereas the
second, third and fourth ones encode the final states. We use
variable $\ee$ to refer to the \emph{end} of $s$, i.e., $\ee$ equals
the last position $|s|$.
%
The remaining clauses are defined in the following:
the fourth one encodes the transition function;
the last ones together encode the fact that there exists exactly one state that may
be assumed by a push transition in any position,
and the correspondence between mark and flush transitions. 

For convenience we introduce in formulae precedence relations and other
shortcut notations, presented next.

\noindent\textit{Notation.} In the following, when considering a chain $\chain a s b$ we assume $s = s_0 c_1 s_1 \dots c_\ell s_\ell$,
with $\chain a {c_1 c_2 \dots c_\ell} b$ a simple chain (any $s_g$ may be empty).
Also let $x_g$ be the position of symbol $c_g$, for $g = 1, 2, \ldots, \ell$
and, for the sake of uniformity, set $c_0 = a$, $x_0 = 0$,  $c_{\ell+1} = b$, and $x_{\ell+1} = |s|+1$.

\begin{eqnarray*}
x \circ y &:=&  \bigvee_{M_{a,b} = \circ} a(x) \land b(y), 
\text{ for }  
\circ \in \{\lessdot, \doteq, \gtrdot \}\\
\tree (x,z,w,y) &:=& 
\left(
\begin{array}{c}
x \avv y 
\\ \land \\
 ( x+1 = z \ \lor\  x \avv z ) \land \neg \exists t ( x < t < z \land x \avv t ) 
\\ \land \\
 ( w+1 = y \ \lor\ w \avv y ) \land \neg \exists t ( w < t < y
\land w \avv y ) 
\end{array}
\right) \\
\suc k x y & := & 
x+1 = y \land x \in P_k
\\
\nex k x y & := &
x \avv y \land x \in M_k \land y-1 \in F_k
\end{eqnarray*}
\begin{eqnarray*}
\flk k x y &:=& 
x \avv y \ \land \ x \in M_k \ \land y-1 \in F_k \ \land \\
& &  
\exists z,w \left(
\tree(x,z,w,y) \land 
\bigvee_{i=0}^N \bigvee_{j=0}^N 
\left(
\begin{array}{c}
\delta(q_i, q_j) = q_k  \\
\land \\ 
(\suc i w y \lor \nex i w y )\\
\land \\
 (\suc j x z \lor \nex j x z )
\end{array}
\right)
\right)\\
\tree_{i,j}( x,z,w,y) &:=&
\tree(x,z,w,y) \land
\left(
\begin{array}{c}
\suc i w y \lor \flk i w y )\\
\land \\ 
 (\suc j x z \lor \flk j x z )\\
\end{array}
\right)
\end{eqnarray*}

\noindent {\it Remarks. } If $x \avv y$ then there exist (unique) $z$ and $w$ such that $\tree(x,z,w,y)$ holds. 
In particular,  if $\chain a s b$ is a simple chain, then $0 \avv \ell+1$ and $\tree(0, 1, \ell, \ell+1)$ holds;
if $\chain a s b$ is a composed chain, 
then $0 \avv |s|+1$ and   $\tree(0, x_1, x_\ell, x_{\ell+1})$ holds. If $s_0 = \epsilon$ then $x_1 = 1$,
and if $s_\ell = \epsilon$ then $x_\ell = |s|$.

\noindent By definition, $\tree_{i,j} (x,z,w,y) \land q_k = \delta(q_i,q_j)$ implies $\flk k x y$.

\noindent  If $\chain a {c_1 c_2 \dots c_\ell} b$
is a simple chain with support
\begin{equation}
\label{eq:simplechainq}
q_i = 
q_{t_0}
\va{c_1}{q_{t_1}}
\va{c_2}{}
\dots
\va{c_\ell}{q_{t_\ell}}
\flush{q_{t_0}}{q_k}
\end{equation}
then $\tree_{t_0,t_\ell}( 0,1,\ell, \ell+1) $
and $\flk k 0 {\ell+1}$ hold; 
if 
$\chain a {s_0 c_1 s_1 c_2 \dots c_\ell s_\ell} b$
is a composed chain with support
\begin{equation}
\label{eq:compchainq}
q_i = 
q_{t_0}
\ourpath{s_0}{q_{f_0}}
\va{c_1}{q_{t_1}}
\ourpath{s_1}{q_{f_1}}
\va{c_2}{}
\dots
\va{c_g}{}
{q_{t_g}}\ourpath{s_g}{q_{f_g}}
\dots
\va{c_\ell}{q_{t_\ell}}
\ourpath{s_\ell}{q_{f_\ell}}
\flush{q_{f_0}}{q_k}
\end{equation}
then by induction we can see that $\tree_{f_\ell,f_0}( 0,|s_0|+1,|s_0\dots c_\ell| ,|s|+1) $ and
$\flk k 0 {|s|+1}$ hold.


\medskip

Formula $\varphi_\delta$ is the conjunction of the following formulae, organized in \emph{forward} formulae
and \emph{backward} formulae:

\medskip
\noindent {\em Forward formulae.}
\[
\varphi_{push\_fw} := \forall x, y  \bigwedge_{i=0}^{N} 
\left(
\begin{array}{c}
( x \lessdot y  \lor  x \doteq y)  \land a(y) \\
\land \\
\suc i x y \lor
\flk i x y
\end{array}
\Rightarrow y \in P_{\delta(q_i, a)}
\right)
\]
\[
\varphi_{flush\_fw} := 
\forall x, z, w, y
\bigwedge_{i=0}^N \bigwedge_{j=0}^N  
\left(
\tree_{i,j}(x,z,w,y)
\Rightarrow
	\begin{array}{c}
	x \in M_{\delta(q_i, q_j)} \\
	\land \\
	 y-1 \in F_{\delta(q_i, q_j)}
	\end{array}\right)
\]
\medskip
\noindent {\em Backward formulae.}
\[
\varphi_{push\_bw1} := \forall x,y  \bigwedge_{k=0}^{N} 
\left(
\begin{array}{c}
( x \lessdot y  \lor  x \doteq y)  \land a(y) \\
\land \\
y \in P_k \ \land \ x+1=y  
\end{array}
\Rightarrow
\bigvee_{i=0}^N  \left( \suc i x y 
\land \delta(q_i,a) = q_k \right)
 \right)
\]
\[
\varphi_{push\_bw2} := \forall x,y  \bigwedge_{k=0}^{N} 
\left(
\begin{array}{c}
( x \lessdot y  \lor  x \doteq y)  \land a(y) \\
\land \\
y \in P_k \ \land \ x \avv y  
\end{array}
\Rightarrow
\bigvee_{i=0}^N  \left( \flk i x y  \land \delta(q_i,a) = q_k \right)
 \right)
\]
\[
\varphi_{flush\_bwM} := 
\forall x \bigwedge_{k=0}^N \left( x \in M_k \Rightarrow 
\exists y,z,w 
\bigvee_{i=0}^N \bigvee_{j=0}^N  
\left(
\begin{array}{c} 
	\tree_{i,j}(x,z,w,y) \\
 	\land \\
 	\delta(q_i,q_j) = q_k
\end{array}
\right)
\right)
\]
\[
\varphi_{flush\_bwF} := 
\forall y \bigwedge_{k=0}^N \left( y \in F_k \Rightarrow 
\exists x,z,w 
\bigvee_{i=0}^N \bigvee_{j=0}^N  
\left(
\begin{array}{c} 
	\tree_{i,j}(x,z,w,y) \\
 	\land \\
 	\delta(q_i,q_j) = q_k
\end{array}
\right)
\right)
\]
\[
\varphi_{flush\_bw} := 
\forall x, z, w, y
\bigwedge_{k=0}^N  \bigwedge_{i=0}^N \bigwedge_{j=0}^N  
\left(\begin{array}{c} 
	\tree_{i,j}(x,z,w,y) \\
	 \land \\ 
	 \flk k x y \\
\end{array}
\Rightarrow
	\delta (q_i, q_j) = q_k
\right)
\]
\noindent
Formula $\varphi_{exist}$ is the conjunction of the following formulae:
\[
\varphi_{push\_exist} :=  
\forall x \left( \bigvee_{i=0}^{N} x \in P_i \right)
\]
\[
\varphi_{flush\_exist} :=  \forall x, y \left( x \avv y \Rightarrow 
\left(
\bigvee_{k=0}^{N} \flk k x y 
\right) \right)
\]
\noindent
Formula $\varphi_{unique}$ is the conjunction of the following formulae:
\[
\varphi_{push\_unique} :=  \forall x 
\bigwedge_{i=0}^{N}
\left(
x \in P_i \Rightarrow 
\neg 
\bigvee_{j=0}^{N} ( j \neq i \land x \in P_j )
\right)
\]
\[
\varphi_{flush\_unique} :=  \forall x, y 
\bigwedge_{k=0}^{N}
\left(
\flk k x y \Rightarrow 
\neg 
\bigvee_{j=0}^{N} ( j \neq k \land \flk j x y )
\right)
\]



\begin{remark}
\label{rem:unique}
If \eqref{eq:logic} holds, then for each $x,y$ $\suc i x y \lor \flk i x y$ implies that such $i$ is unique.
Indeed, $\suc j x y$ and $\flk k x y$ are mutually exclusive;
if $\flk i x y$ then such $i$ is unique by $\varphi_{flush\_unique}$;
if $\suc i x y$ then $y = x+1$ and $x \in P_i$, thus such $i$ is unique by $\varphi_{push\_unique}$.
\end{remark}


Now let $\mathcal C = \chain asb$ be a chain in $(\Sigma, M)$
and set
\[
\psi_{i,k} \ := \ 
\begin{array}{c}
\exists P_0, P_1, \ldots, P_N \\
\exists M_0, M_1, \ldots, M_N \\
\exists F_0, F_1, \ldots, F_N \
\end{array}
\
\exists \ee
\
\left(
0 \in P_i 
\ \land \
\flk k 0 {\ee+1}
\ \land \
\varphi_{\delta}
\ \land \
\varphi_{exist}
\ \land \
\varphi_{unique}
\right).
\]

The following lemmata hold.

\begin{lemma}\label{lem:a2f}
  If there exists a support $q_i \ourpath{s}{q_k} $ for the chain
  $\mathcal C$ in $\mathcal A$, then $asb \models \psi_{i, k}$.
\end{lemma}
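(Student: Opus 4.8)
The plan is to prove the lemma by induction on the (well-founded, since $M$ is acyclic) structure of the chain $\mathcal C = \chain asb$, exhibiting explicit witnesses for the existentially quantified sets $P_0,\dots,P_N$, $M_0,\dots,M_N$, $F_0,\dots,F_N$ and the position $\ee$. The guiding idea is that a support $q_i \ourpath{s} q_k$ is itself a run of $\mathcal A$, so I would simply read off the states visited along it: set $\ee := |s|$, put $0 \in P_i$ (the support starts in $q_i$ with $a$ in position $0$), and, for every push/mark that reads the symbol in position $x$ and enters a state $q_j$, put $x \in P_j$; for every flush occurring in the run that reduces a sub-chain spanning positions $x'$ to $y'$ and produces state $q_j$ (from the top state, at position $w'$, and the state below the mark, at position $x'$), put $x' \in M_j$ and $y'-1 \in F_j$. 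I then claim that this assignment satisfies $0 \in P_i \land \flk k 0 {\ee+1} \land \varphi_\delta \land \varphi_{exist} \land \varphi_{unique}$.

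In the base case $\mathcal C$ is a simple chain with support $q_i = q_{t_0} \va{c_1} q_{t_1} \dots \va{c_\ell} q_{t_\ell} \flush{q_{t_0}} q_k$. Here the witness is immediate: $P_{t_g}$ receives position $g$ for $g = 0,\dots,\ell$, and the single flush contributes $0 \in M_k$ and $\ell \in F_k$. Using the remark that $\tree(0,1,\ell,\ell+1)$ holds for a simple chain, together with $\delta(q_{t_\ell},q_{t_0}) = q_k$ read off the flush, one checks $\tree_{t_0,t_\ell}(0,1,\ell,\ell+1)$ and hence $\flk k 0 {\ell+1}$; the remaining clauses of $\varphi_\delta$, $\varphi_{exist}$ and $\varphi_{unique}$ reduce to the single backbone of pushes and the one flush, and are verified directly.

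For the inductive step $\mathcal C$ is a composed chain $\chain{a}{s_0 c_1 s_1 \dots c_\ell s_\ell}{b}$, whose support decomposes as in \eqref{eq:compchainq}; each non-empty factor $s_g$ carries a sub-support $q_{t_g} \ourpath{s_g} q_{f_g}$ for the sub-chain $\chain{c_g}{s_g}{c_{g+1}}$, to which I would apply the induction hypothesis (the empty factors give $q_{f_g}=q_{t_g}$ and are trivial). I obtain the global witness by taking, after shifting every local position by the offset $x_g$ of $c_g$, the union of the sub-witnesses together with the backbone data ($x_g \in P_{t_g}$ for the mark/push reading $c_g$, and $0 \in M_k$, $|s| \in F_k$ for the final flush $\flush{q_{f_0}}{q_k}$, with $\delta(q_{f_\ell},q_{f_0}) = q_k$). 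The one point needing care in the gluing is the boundary position $x_{g+1}$, shared by sub-chain $g$ (as its right context) and sub-chain $g+1$ (as its left context); consistency of the two $P$-assignments there holds because the backbone transition forces the shared value to be $\delta(q_{f_g},c_{g+1}) = q_{t_{g+1}}$, the flush result $q_{f_g}$ of sub-chain $g$ being pinned down by the second index of the invoked $\psi_{t_g,f_g}$. The remark on composed chains then yields $\tree_{f_\ell,f_0}(0,|s_0|+1,|s_0\dots c_\ell|,|s|+1)$ and the desired $\flk k 0 {|s|+1}$.

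The hard part will be the verification of the backward and uniqueness clauses of $\varphi_\delta$ together with $\varphi_{unique}$, i.e. showing that the constructed sets contain nothing spurious beyond what the run dictates. This rests on two facts. First, by the remark, whenever $x \avv y$ the positions $z,w$ with $\tree(x,z,w,y)$ are unique, so each pair $(x,y)$ with $\flk k x y$ corresponds to exactly one sub-chain and hence to exactly one flush of the run; determinism of $\mathcal A$ then makes the produced state $q_k$ unique, giving $\varphi_{flush\_unique}$ and the backward implications $\varphi_{flush\_bw}$, $\varphi_{flush\_bwM}$, $\varphi_{flush\_bwF}$. Second, the delicate case is that a single position may legitimately belong to several $M_i$ or several $F_i$ (the $A \stackrel{*}{\Rightarrow}\alpha^k A$ phenomenon from the introduction, illustrated in Figure~\ref{fig:MP}); I must therefore check that $\varphi_{flush\_unique}$, which is keyed on the pair $(x,y)$ rather than on a single position, and $\varphi_{push\_unique}$, which uses that each position is pushed exactly once, remain satisfied. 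Once these points are settled, the forward clauses and $\varphi_{exist}$ follow routinely from the fact that the witness is read off an actual run, completing the induction.
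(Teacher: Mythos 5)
Your proposal is correct and follows essentially the same route as the paper's proof: induction on the structure of the chain, with the witness sets $P_h, M_h, F_h$ read off the support (push positions into the $P_h$, flush endpoints into the $M_h$ and $F_h$), shifted sub-witnesses unioned together in the composed-chain case, and the backward and uniqueness clauses discharged via the uniqueness of $z,w$ in $\tree(x,z,w,y)$ together with the determinism of $\mathcal A$. The only detail you omit is that the paper additionally places the final position $\ell+1$ (resp.\ $|s|+1$) into $P_{\delta(q_k,b)}$ when $a \lessdot b$ or $a \doteq b$, so that the push-related clauses are also satisfied at the right boundary of the chain.
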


\begin{proof}
We prove the lemma by induction on the structure of chains. 


\noindent {\bf Base step }
Let $\mathcal C$ be a simple chain and its support be decomposed  as
in~\eqref{eq:simplechainq}.

Define $\ee = \ell$, and $P_0, P_1,\ldots, P_N, M_0, \ldots, M_N, F_0, \ldots, F_N $ as follows.
$M_h$ is empty except for $M_k = \{0\}$;
$F_h$ is empty except for $F_k = \{\ell\}$;
for every $x = 0 \ldots \ell$, let $P_h$ contain $x$ iff $t_x = h$
(i.e., $x \in P_{t_x}$);
finally let $P_{\delta(q_k,b)}$ contain $\ell +1$ if $a \lessdot b$ or $a \doteq b$.

Then  we show that  $\psi_{i,k}$ is satisfied
by checking every subformula in $\varphi_\delta$,
$\varphi_{exist}$, $\varphi_{unique}$.
\begin{enumerate}
\item $\varphi_{push\_fw}$ is satisfied $\forall x = y-1 < \ell$ with
  $y \in P_{\delta(q_x, a)} \land a(y)$.
Then $\delta(q_{t_\ell}, q_{t_0}) = q_k$ guarantees $\flk k 0
{\ell+1}$; and $\delta(q_{k}, b) = q_{t_{\ell+1}}$ guarantees $\ell + 1 \in
P_{\ell+1}$.

{\em Remark. } Even if $\mathcal A$ is deterministic, some
chains could have different supports. However, every support produces
exactly one assignment $P_{t_0}, P_{t_1}, \ldots, F_k, M_k$ that
satisfies $\psi_{t_0, k}$.

\item $\varphi_{flush\_fw}$ is satisfied for $x=0, z=1, w=\ell,
  y=\ell+1$
with 
  $P_{t_{\ell}}, P_{t_0}, F_k, M_k$ (for all other cases, it is $\neg
  \tree_{i,j} (x,z,w,y)$).

\item $\varphi_{push\_bw1}$ is satisfied in the natural way for every
  $y \le \ell$; for $y = \ell+1$, it is $x \gtrdot y$, $x+1=y$, which
  implies $\neg( x \lessdot y \lor x \doteq y)$ and the antecedent is
  false.

\item $\varphi_{push\_bw2}$, for every pair $(x,y) \ne (0,\ell+1)$ is
  satisfied with $\neg x \avv y$; for $x = 0$, $y =\ell+1$, if $x \gtrdot
  y$ the antecedent is false, otherwise it is satisfied with $\flk k 0
  {\ell+1}$, $P_{t_{\ell+1}}$.

\item $\varphi_{flush\_bwM}$ and $\varphi_{flush\_bwF}$ are satisfied
  with  $x=0$ and $y = \ell+1$, respectively.
(For $x>0$, $y\le\ell$ the antecedents are false.)

\item $\varphi_{flush\_bw}$ is satisfied in a vacuous way (false
  antecedent) for $(x,y) \ne (0, \ell+1)$. For $x = 0, y = \ell+1$ it
  is satisfied with $i = t_\ell, j=t_0, F_k$. 

\item $\varphi_{push\_exist}$, $\varphi_{push\_unique}$,
  $\varphi_{flush\_exist}$, and $\varphi_{flush\_unique}$ are
  always satisfied, because a) the chain has a support, b) $\mathcal
  A$ is deterministic.

\item $\psi_{t_0,k}$ is finally satisfied with $\flk k 0 {\ell+1}$.
\end{enumerate}

\smallskip

\noindent {\bf Induction step }

Let now $\mathcal C$ be a composed chain and let its support be decomposed as
in~\eqref{eq:compchainq}. 
Let us consider the case $s_0 \ne \epsilon \ne
s_\ell$ (other cases are similar and simpler, therefore
omitted). Thus, $\delta(q_{f_\ell}, q_{f_0}) = q_k$.

Let $\ee$ be $|s|$. By the inductive hypothesis,  for every $g = 0, 1, \ldots, \ell$ such that $s_g
\neq \epsilon$ we have
$c_g s_g c_{g+1} \models \psi_{t_g,f_g}$: let ${P_0}^g,\ldots, {P_N}^g, {M_0}^g, \ldots, {M_N}^g,
{F_0}^g, \ldots, {F_N}^g $ be (the naturally shifted versions of) an assignment that satisfies $ \psi_{t_g,f_g}$.
In particular this means $x_g \in P_{t_{g}} \cup M_{f_{g}}$, $x_{g+1}-1
\in F_{f_g}$, and $\flk{f_g}{x_g}{x_{g+1}}$.
Then define $P_h, M_h, F_h$ as follows.  
Let $P_h$ be the union of all ${P_h}^g$,
$M_h$ include all ${M_h}^g$,
$F_h$ include all ${F_h}^g$.
Also let $M_k$ contain $x_0$
and $F_k$ contain $x_\ell$.
Finally  let $P_{\delta(q_k,b)}$ contain $\ell +1$ if $a \lessdot b$ or $a \doteq b$.

Then  we show that  $\psi_{i,k}$ is satisfied
by checking every subformula in $\varphi_\delta$,
$\varphi_{exist}$, $\varphi_{unique}$.
By the inductive hypothesis, all axioms are satisfied within every
$s_g$. Thus, we only have to prove that they are satisfied in
positions $x_g$, for $0 \le g \le \ell$.
The proof of satisfaction of most axioms in $\psi_{i,k}$ is
clerical. Thus, we consider only a meaningful sample thereof.

\begin{enumerate}

\item $\varphi_{push\_fw}$ is satisfied for $x=x_{g-1}$ and $y = x_g$ since
$\suc {f_{g-1}}{x_{g-1}}{x_g} \lor \flk {f_{g-1}}{x_{g-1}}{x_g} $
holds and $\delta(q_{f_{g-1}},c_g) = q_{t_g}$, $x_g \in P_{t_g}$.

\item $\varphi_{flush\_fw}$ is satisfied
for $\tree_{f_\ell,f_0} (0,1,x_\ell,x_{\ell+1})$ since $0 \in M_k$, $x_\ell \in F_k$,  
$\delta(q_{f_{\ell}},
  q_{f_0}) = q_k$.

\item  $\varphi_{push\_bw2}$ is satisfied for $x_g\in P_{t_g}$ and $x_{g-1} \avv x_g$ (if $s_{g-1} \ne
\epsilon$), since $\flk {f_{g-1}} {x_{g-1}} {x_g}$ and $\delta(q_{f_{g-1}}, c_g) = q_{t_g}$. 

\item $\varphi_{push\_bwM}$, $\varphi_{push\_bwF}$,
  $\varphi_{push\_bw}$ are satisfied for $\tree_{f_\ell, f_0} (0,1,x_\ell,x_{\ell+1})$ by $\delta(q_{f_\ell},
  q_{f_0}) = q_k$.

\item $\varphi_{push\_unique}$, and $\varphi_{flush\_unique}$ are
  satisfied because $\mathcal A$ is deterministic.

\end{enumerate}

\noindent Hence $asb \models \psi_{i,k}$. \qed
\end{proof}

\begin{lemma}\label{lem:f2a}
For every chain $\mathcal C$, $asb \models \psi_{i, k}$ implies that there exists a support 
$q_a \ourpath{s}{q_k}$ for $\mathcal C$ in $\mathcal A$.
\end{lemma}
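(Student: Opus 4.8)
The plan is to establish this converse of Lemma~\ref{lem:a2f} by induction on the structure of $\mathcal C$, reading a satisfying assignment of $\psi_{i,k}$ as a recipe for a support rather than building an assignment from one. Fix values for $P_0,\dots,P_N,M_0,\dots,M_N,F_0,\dots,F_N,\ee$ that make $asb\models\psi_{i,k}$ true. By $\varphi_{push\_exist}$ and $\varphi_{push\_unique}$ every position $x$ lies in exactly one $P_h$, so I may define $t_x$ by $x\in P_{t_x}$; then $t_0=i$ because $0\in P_i$, and the states $q_{t_0},q_{t_1},\dots$ are the candidates for the push/mark part of the support. Dually, by $\varphi_{flush\_exist}$ and $\varphi_{flush\_unique}$, whenever $x\avv y$ there is a unique $h$ with $\flk h x y$, which is what lets me read off flush states unambiguously.

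For the base step let $\mathcal C=\chain{a}{c_1\cdots c_\ell}{b}$ be simple and $\ee=\ell$. The candidate support is \eqref{eq:simplechainq} with these $t_g$. For each $g$ the backbone relation $c_{g-1}\lessdot c_g$ or $c_{g-1}\doteq c_g$ holds together with $(g-1)+1=g$, so $\suc{t_{g-1}}{g-1}{g}$ is true and $\varphi_{push\_bw1}$ forces $\delta(q_{t_{g-1}},c_g)=q_{t_g}$: every push/mark edge is validated. For the closing flush I unfold $\flk k 0 {\ell+1}$; since $x\avv y$ determines unique $z,w$ with $\tree(x,z,w,y)$, the witnesses are forced to $z=1,w=\ell$, and because $0\avv1$ and $\ell\avv\ell+1$ are impossible the internal disjuncts collapse to $\suc{t_\ell}\ell{\ell+1}$ and $\suc{t_0}01$, so the equation carried by $\fl$ is exactly $\delta(q_{t_\ell},q_{t_0})=q_k$, i.e.\ the flush labelled $q_{t_0}$ required by \eqref{eq:simplechainq}.

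For the induction step let $\mathcal C=\chain{a}{s_0c_1s_1\cdots c_\ell s_\ell}{b}$ and let $x_g$ be the position of $c_g$. Whenever $s_g\neq\epsilon$ we have $x_g\avv x_{g+1}$, so $\varphi_{flush\_exist}$ supplies a unique $f_g$ with $\flk{f_g}{x_g}{x_{g+1}}$; for $s_g=\epsilon$ put $f_g=t_g$. The heart of the argument is to show that the assignment shifted and restricted to the factor $c_g s_g c_{g+1}$ witnesses $c_g s_g c_{g+1}\models\psi_{t_g,f_g}$, after which the inductive hypothesis yields a support $q_{t_g}\ourpath{s_g}q_{f_g}$ for each nonempty $s_g$ (the empty case being trivial). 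It then remains to splice: the connecting edge $q_{f_{g-1}}\va{c_g}q_{t_g}$ comes from $\varphi_{push\_bw1}$ when $s_{g-1}=\epsilon$ (through $\suc{f_{g-1}}{x_{g-1}}{x_g}$) and from $\varphi_{push\_bw2}$ otherwise (through $\flk{f_{g-1}}{x_{g-1}}{x_g}$), both giving $\delta(q_{f_{g-1}},c_g)=q_{t_g}$; and the top flush $\flush{q_{f_0}}q_k$ is obtained by checking $\tree_{f_\ell,f_0}(0,x_1,x_\ell,|s|+1)$, using the $\xz$ alternative when $s_0$, resp.\ $s_\ell$, is empty and the $\fl$ alternative otherwise, and invoking $\varphi_{flush\_bw}$ against $\flk k 0 {|s|+1}$, which returns $\delta(q_{f_\ell},q_{f_0})=q_k$. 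Assembling these edges with the sub-supports produces exactly \eqref{eq:compchainq}.

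I expect the restriction step to be the main obstacle. The point to nail down is that the relation $\avv$ computed inside the standalone factor $c_g s_g c_{g+1}$ coincides with the global $\avv$ restricted to the interval $[x_g,x_{g+1}]$; this is precisely the locality principle, since a contiguous sub-string delimited by $c_g$ and $c_{g+1}$ is a chain whose internal precedence pattern, and hence its chains, do not depend on the surrounding context. Granting this, $\tree$, $\xz$ and $\fl$ also restrict, so each local axiom of $\varphi_\delta$, $\varphi_{exist}$ and $\varphi_{unique}$ continues to hold on the smaller domain; crucially, the absence of any uniqueness requirement on the $M_h$ and $F_h$ means that global marks or flush positions carried by the top-level flush (such as $0\in M_k$ or $x_\ell\in F_k$) do no harm, while $\varphi_{flush\_unique}$ is unaffected because it is indexed by the pair $(x,y)$ and the only such pairs internal to the factor are themselves internal chains. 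This is what makes $\flk{f_g}{x_g}{x_{g+1}}$ shift cleanly to $\flk{f_g}0{|s_g|+1}$ and legitimizes the appeal to the inductive hypothesis.
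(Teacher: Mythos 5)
Your proof follows essentially the same route as the paper's: the same induction on chain structure, reading the states $q_{t_g}$ off the $P$-sets and the $f_g$ off the (unique) $\fl$/$\xz$ indices, using $\varphi_{push\_bw1}$/$\varphi_{push\_bw2}$ for the push/mark edges and $\varphi_{flush\_bw}$ (or a direct unfolding of $\fl$) for the closing flush, and restricting the satisfying assignment to each factor $c_g s_g c_{g+1}$ before invoking the inductive hypothesis. If anything, you are more explicit than the paper on the two points it glosses over, namely the need for $\varphi_{push\_bw2}$ on the connecting edge after a nonempty $s_{g-1}$, and the locality argument justifying $c_g s_g c_{g+1} \models \psi_{t_g,f_g}$.
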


\begin{proof}
Again, we prove the lemma by induction on the structure of chains. 
\smallskip

\noindent {\bf Base step }
First consider the induction bases with $s_g = \epsilon$ for every $g = 0, 1, \ldots, \ell$, i.e., $\chain a s b$ is a simple chain with $s = c_1 c_2 \cdots c_\ell$.
Let $asb \models \psi_{i,k}$. Hence there is a suitable assignment for $\ee, P_h,
M_h, F_h$ such that
$0 \in P_i 
\ \land \
\flk k 0 {\ee+1}
\ \land \
\varphi_{\delta}
\ \land \
\varphi_{exist}
\ \land \
\varphi_{unique}$ 
holds true. 
Clearly $\ee$ is $|s|$.
For every $g$, let $t_g$ be the index such that $g \in P_{t_g}$. 
Notice that $t_g$ is unique by $\varphi_{push\_unique}$ and in particular $t_0 = i$.
Hence $t_g$ is the unique index such that $\suc {t_g} g {g+1}$.
Then, by $\varphi_{push\_bw1}$ with $y = g < \ell$, we have $\delta(q_{t_g},
c_{g+1}) = q_{t_{g+1}}$.
Moreover, since $\flk k 0 {\ell+1} \land \tree_{t_\ell,t_0} (0,1,\ell,\ell+1)$, 
by $\varphi_{flush\_bw}$ we get $\delta(q_{t_\ell}, q_{t_0}) = q_k $. Hence we
have built a support like~\eqref{eq:simplechainq}.

\smallskip

\noindent {\bf Induction step }
Now consider the general case with $s = s_0 c_1 s_1 \dots c_\ell s_\ell$
and again consider the assignment for $P_h,M_h, F_h$ that satisfies
$\psi_{i,k}$.
For every $g$, let $t_g$ be the index such that $x_g \in P_{t_g}$, and notice
that $t_g$ is unique by $\varphi_{push\_unique}$;
in particular $t_0 = i$.
For $g = 0,1, \ldots, \ell$, since $x_g \avv x_{g+1} \lor x_{g+1} = x_g+1$,
let $f_g$ be the index such that $\flk {f_g} {x_g} {x_{g+1}} \lor \suc {f_g} {x_g} {x_{g+1}}$.
Notice that such $f_g$ is unique by $\varphi_{unique}$ (see
Remark~\ref{rem:unique}), moreover $s_g = \epsilon$ implies $f_g = t_g$.
Hence if $s_g \neq \epsilon$, we have $c_g s_g c_{g+1} \models \psi_{t_g,f_g}$ and, by the inductive hypothesis, there exists a support
$q_{t_g} \ourpath{s_g}{q_{f_g}}$ in $\mathcal A$.

For every $g = 0 < \ell$, since $f_g$ is unique, by applying
$\varphi_{push\_bw1}$ with $y = x_{g+1} $ we get $\delta(q_{f_g}, c_{g+1}) =
q_{t_{g+1}}$.
Moreover, since $\tree_{i,t_\ell} (x_0, x_1, x_\ell, x_{\ell+1}) \land \flk k {x_0} {x_{\ell+1}} 
$,
by $\varphi_{flush\_bw}$ we get $\delta(q_{t_\ell}, q_i) = q_k $. Hence we have
built a support like~\eqref{eq:compchainq} and this concludes the proof.
\qed
\end{proof}

\begin{proposition}
\label{prop:logic:f2a}
Let $(\Sigma, M)$ be an operator precedence alphabet and $\mathcal A$
be a Floyd automaton over $(\Sigma,M)$. Then there exists an
MSO$_{\Sigma, M}$ sentence $\varphi$ such that $L (\mathcal A)=
L(\varphi)$.
\end{proposition}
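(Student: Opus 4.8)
The plan is to take the sentence $\varphi$ already displayed in~\eqref{eq:logic} and prove $L(\mathcal A)=L(\varphi)$ by collapsing both membership conditions onto the same statement about supports. As the subsection assumes $\mathcal A$ deterministic -- legitimate since deterministic and nondeterministic FA coincide -- the clause $\varphi_{unique}$ and Lemmas~\ref{lem:a2f} and~\ref{lem:f2a} are all at my disposal. The argument pivots on a single bridge, already prepared by those lemmas: for the outermost chain $\chain{\#}{s}{\#}$, the existence of a support running from the initial state to a final one is equivalent, on one side, to acceptance of $s$ by $\mathcal A$ and, on the other side, to $\# s\#\models\varphi$.

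I would first unfold acceptance through the locality principle recalled after the definition of support. The run $\config{\stack 0\#{q_0}}{s\#}\comp{*}\config{\stack 0\#{q_F}}{\#}$ exists exactly when $s$ is compatible with $M$ and there is a support $q_0\ourpath{s}{q_F}$ for $\chain{\#}{s}{\#}$, with $q_0$ the unique initial state and $q_F$ the state produced by the last flush, the one that rebuilds $\stack 0\#{q_F}$. Hence $s\in L(\mathcal A)$ iff some such support has $q_F\in F$. Plugging this into Lemma~\ref{lem:a2f} in one direction and Lemma~\ref{lem:f2a} in the other, specialised to $a=b=\#$, yields $s\in L(\mathcal A)$ iff $\# s\#\models\psi_{0,k}$ for some $k\in F$, i.e. $L(\mathcal A)=\bigcup_{k\in F}L(\psi_{0,k})$. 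Only formula manipulation then remains.

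It thus suffices to prove $\# s\#\models\varphi$ iff $\# s\#\models\bigvee_{k\in F}\psi_{0,k}$, which I would do by lining the two sentences up clause by clause: they already share $0\in P_0$, $\varphi_\delta$, $\varphi_{exist}$ and $\varphi_{unique}$, leaving exactly two discrepancies. The clauses $\neg\exists x(\ee+1<x)$ and $\#(\ee+1)$ occurring in $\varphi$ only pin $\ee$ to the last position $|s|$, the very value the lemmas assign; and the final clause $\bigvee_{i\in F}\ee\in F_i$ of $\varphi$ has to be reconciled with $\flk{k}{0}{\ee+1}$ carrying $k\in F$. For the forward direction I would start from a witnessing assignment for $\varphi$: the final clause together with the backward flush clauses of $\varphi_\delta$ force $\ee=|s|$ into some $F_i$ only in the presence of a chain reaching the final position, so $0\avv|s|+1$ and $s$ is compatible; then $\varphi_{flush\_exist}$ supplies, and $\varphi_{flush\_unique}$ renders unique, the outermost-flush state $k$ with $\flk{k}{0}{\ee+1}$, and it remains to see that this $k$ is final -- the crux isolated in the last paragraph. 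Conversely a witness for $\psi_{0,k}$ with $k\in F$ already satisfies $\flk{k}{0}{\ee+1}$, hence $\ee\in F_k$ and the final clause of $\varphi$, while the two pinning clauses hold because $\ee=|s|$.

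The step I expect to be delicate is exactly this final-state encoding, i.e. squaring $\bigvee_{i\in F}\ee\in F_i$ with $\flk{k}{0}{\ee+1}$. Since a flush leaves the input head in place, several flushes may terminate at the closing $\#$, so the position $\ee=|s|$ can belong to $F_i$ for more than one $i$ -- typically for the rightmost inner flush as well as for the outermost one. The argument must therefore read the final clause \emph{jointly} with $\flk{k}{0}{\ee+1}$ and appeal to $\varphi_{flush\_unique}$ at the pair $(0,|s|+1)$, so that the state submitted to the finality test is unambiguously that of the outermost flush -- the only one restoring the stack to $\stack 0\#{q_F}$ and hence the one that decides acceptance. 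The complementary case of an $s$ not compatible with $M$ is then immediate: $0\avv|s|+1$ fails, no support exists and $\mathcal A$ rejects, while on the logical side the same backward flush clauses keep $\ee$ out of every $F_k$, so the final clause of $\varphi$ fails too and the two languages coincide.
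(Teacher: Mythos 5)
Your overall route is the same as the paper's: reduce acceptance of $s$ to the existence of a support $q_0 \ourpath{s}{q_F}$ for the outermost chain $\chain{\#}{s}{\#}$, then invoke Lemma~\ref{lem:a2f} in one direction and Lemma~\ref{lem:f2a} in the other to pass between such supports and satisfaction of $\psi_{0,k}$; the paper's own proof is exactly this and nothing more. Your additional care in pinning $\ee$ to $|s|$ and in handling strings not compatible with $M$ is sound, and you have correctly isolated the only genuinely delicate step --- one the paper's proof silently skips: getting from the clause $\bigvee_{i\in F}\ee\in F_i$ of $\varphi$ to $\flk{k}{0}{\ee+1}$ for a \emph{final} $k$, which is what $\psi_{0,k}$ actually demands.

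However, the patch you propose for that step does not work as stated. $\varphi_{flush\_unique}$ only asserts that for a \emph{fixed} pair $(x,y)$ there is at most one $k$ with $\flk k x y$; it does not tie the bare membership $\ee\in F_i$ to the pair $(0,\ee+1)$. Since a position may belong to several $F_i$ (the paper's Figure~\ref{fig:MP} illustrates exactly this), the witness $i\in F$ of the final clause could a priori be contributed by an \emph{inner} flush $\flk{i}{x_\ell}{\ee+1}$ with $x_\ell>0$ (the flush of the subchain $\chain{c_\ell}{s_\ell}{\#}$ when $s_\ell\neq\epsilon$), while the unique flush at $(0,\ee+1)$ ends in a non-final state. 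To close this you must trace $\ee\in F_i$ back through $\varphi_{flush\_bwF}$ and $\varphi_{flush\_bw}$ to a specific pair $(x,\ee+1)$ and argue that the relevant witness is the one with $x=0$ --- or, more directly, read (or rewrite) the final clause as $\bigvee_{i\in F}\flk i 0 {\ee+1}$, which is literally what $\psi_{0,k}$ requires. As it stands, your appeal to $\varphi_{flush\_unique}$ leaves the inclusion $L(\varphi)\subseteq L(\mathcal A)$ open at precisely the point where the paper's own argument is also silent, so flagging the issue is to your credit, but the resolution still needs to be supplied.
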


\begin{proof} 
  Let $\varphi$ be the MSO$_{\Sigma, M}$ sentence defined
  in~\eqref{eq:logic}. We show that $L (\mathcal A)= L(\varphi)$ by
  applying the previous lemmata.
Consider an accepting computation of $s$ in $\mathcal A$. Then there
exists a support $q_0 \ourpath{s}{q_k}$ for the chain $\chain \# s
\#$, with $q_k$ a final state; hence by Lemma~\ref{lem:a2f}, $\# s\# \
\models \ \psi_{0,k}$.
Vice versa, let $ s \in L(\varphi)$, then $\# s \# \ \models \
\psi_{0,k}$ with $q_k$ a final state; hence Lemma~\ref{lem:f2a} implies
that there exists a path $q_0 \ourpath{s}{q_j}$ and this concludes the proof.
\qed
\end{proof}

\section{Conclusions and future work}\label{sec:concl}

This paper somewhat completes a research path that began more than
four decades ago and was resumed only recently with new -and old-
goals. FL enjoy most of the nice properties that made regular
languages highly appreciated and applied to achieve decidability and,
therefore, automatic analysis techniques. In this paper we added to
the above collection the ability to formalize and analyze FL by means
of suitable MSO logic formulae. 

 New research topics, however,
stimulate further investigation. Here we briefly mention only two
mutually related ones. On the one hand, FA devoted to analyze strings should be
extended in the usual way into suitable transducers. They could be
applied, e.g. to translate typical mark-up languages such as XML,
HTML, Latex, \ldots into their end-user view. Such languages, which
motivated also the definition of VPL, could be classified as
``explicit parenthesis languages'' (EPL), i.e. languages whose
syntactic structure is explicitly apparent in the input string. On the
other hand, we
plan to start from the remark that VPL are characterized by a well
precise shape of the OPM \cite{Crespi-ReghizziM12} to characterize
more general classes of such EPL: for instance the language of Example
1 is such a language that is not a VPL, however. Another notable
feature of FL, in fact, is that they are suitable as well to parse
languages with implicit syntax structure such as most programming
languages as to analyze and translate EPL.

\bibliographystyle{splncs}
\bibliography{VPDbib}

\end{document}